\theoremstyle{plain}
\newtheorem{theorem}{Theorem}
\newtheorem{corollary}{Corollary}
\newtheorem{lemma}{Lemma}
\theoremstyle{definition}
\newcommand{\bra}[1]{\langle#1|}
\newcommand{\ket}[1]{|#1\rangle}
\newcommand{\braket}[2]{\langle#1|#2\rangle}
\newcommand{\ketbra}[2]{|#1\rangle\!\langle#2|}
\DeclareMathOperator{\Tr}{Tr}
\begin{document}
\title{Limitations on information theoretically secure quantum homomorphic encryption}
\author{Li Yu}
\affiliation{Singapore University of Technology and Design, 20 Dover Drive, Singapore 138682}
\affiliation{Centre for Quantum Technologies, National University of Singapore, 3 Science Drive 2, Singapore 117543}
\author{Carlos A. P\'erez-Delgado}
\affiliation{Singapore University of Technology and Design, 20 Dover Drive, Singapore 138682}
\author{Joseph F. Fitzsimons}\email{joe.fitzsimons@nus.edu.sg}
\affiliation{Singapore University of Technology and Design, 20 Dover Drive, Singapore 138682}
\affiliation{Centre for Quantum Technologies, National University of Singapore, 3 Science Drive 2, Singapore 117543}
\begin{abstract}
Homomorphic encryption is a form of encryption which allows computation to be carried out on the encrypted data without the need for decryption. The success of quantum approaches to related tasks in a delegated computation setting has raised the question of whether quantum mechanics may be used to achieve information theoretically secure fully homomorphic encryption. Here we show, via an information localisation argument, that deterministic fully homomorphic encryption necessarily incurs exponential overhead if perfect security is required.
\end{abstract}
\maketitle

The insight that information must be represented and manipulated in accordance with physical laws has led to the blossoming field of quantum information science. The applications of this approach to information processing are diverse, and it has led to discoveries ranging from new algorithms \cite{shor1994algorithms,grover1996fast} and communications protocols \cite{bennett1992communication, bennett1993teleporting} which exploit quantum states for increased efficiency to techniques for enhancing the precision of metrology \cite{giovannetti2004quantum}. Historically, cryptography was one of the first fields for which quantum information processing was shown to offer an advantage over classical processing, when in 1984 Bennett and Brassard introduced a quantum protocol for information theoretically secure key distribution \cite{bb84}. While for many quantum cryptography has remained synonymous with quantum key distribution, the field has grown substantially, with quantum protocols being discovered which enhance the security with which many cryptographic tasks can be accomplished, including digital signatures \cite{gottesman2001quantum}, anonymous communication \cite{brassard2007anonymous}, private database queries \cite{giovannetti2008quantum}, and random number generation \cite{pironio2010random}. The importance of information theoretically secure cryptography is highlighted by the fact that quantum algorithms offer new attacks against cryptosystems which rely on assumptions of computational intractability for their security \cite{yan2007cryptanalytic,boneh1995quantum,brassard1998quantum}. Unfortunately, not all cryptographic tasks that we may wish to accomplish admit an information theoretically secure quantum protocol, and indeed a number of no-go theorems have been discovered which show that quantum mechanics alone is insufficient to accomplish certain tasks, such as bit commitment \cite{mayers1997unconditionally} and oblivious transfer \cite{lo1997insecurity}, with perfect security.

One of the most celebrated results in classical cryptography in recent years has been the discovery of computationally secure protocols for fully homomorphic computation \cite{Gentry09,Dijk09,smart2010fully,brakerski2011efficient}. A homomorphic encryption scheme is one which allows data to be encrypted in such a way that certain operations can be performed on the data without decryption. This allows a user to provide encrypted data to a remote server for processing without having to reveal the plaintext. A number of examples of such homomorphic encryption schemes have been known for many years \cite{rivest1978data}, but it was the ground-breaking work of Gentry \cite{Gentry09} which for the first time demonstrated a fully homomorphic encryption scheme, one which allowed for arbitrary computations to be performed on the encrypted data, rather than being restricted to some class of non-universal operations. The ability to perform universal computation on encrypted data has greatly increased the utility of homomorphic encryption, and as a result it has become one of the most active areas of modern cryptography.

One draw back of known fully homomorphic schemes is that they derive their security from computational assumptions. The existence of perfectly secure quantum protocols for blind computation \cite{bfk09,fitzsimons2012unconditionally,morimae2013blind,mantri2013optimal}, and recent experimental demonstrations thereof \cite{barz2012demonstration,barz2013experimental}, highlight the possibilities opened by quantum cryptographic techniques in this area. As cryptographic tasks, blind computation and homomorphic encryption are similar in many ways. Both tasks envision a two party scenario, where the first party, Alice, wishes the second party, Bob, to carry out a computation for her, without revealing the input of her computation. In blind computation, however, Alice specifies not only the input data but also the computation to be performed, and the task is to utilise Bob's resources to perform this computation without revealing either the input or the program. As a result, the current protocols for accomplishing this task are interactive, requiring multiple rounds of communication between Alice and Bob, a significant difference from the setting of homomorphic encryption.

The idea of quantum homomorphic encryption appears in \cite{MinL13}, which shows that a perfect, universal, quantum homomorphic scheme cannot be constructed using one-time pads and which presents an interactive protocol for achieving similar functionality. Other cryptographic schemes have been proposed that achieve some of the functionality of homomorphic encryption using quantum data \cite{Ch05,Fisher13}. However, these rely on assisted computation, and so require multiple rounds of interaction between Alice and Bob, thus amounting to interactive protocols rather than simply encryption schemes. A quantum homomorphic encryption scheme does exist for a restricted model of quantum computation known as boson scattering, which offers limited information theoretic security \cite{rfg12}. The existence of such schemes raises the question as to whether quantum techniques can be exploited to construct an information theoretically secure fully homomorphic encryption scheme. Here we answer that question in the negative by proving that quantum mechanics does not allow for efficient information theoretically secure fully homomorphic encryption that perfectly conceals the plaintext. To achieve this we first formalise the notion of quantum homomorphic encryption, and then proceed to show via an information localisation argument that any such scheme which perfectly hides Alice's input must necessarily reveal the computation performed, and hence the encoding must be sufficiently long to specify any such computation. For a fully homomorphic encryption scheme this implies that the coding must be exponentially long, and thus rules out the existence of efficient fully homomorphic encryption schemes which perfectly hide Alice's data.

Formally, a classical homomorphic encryption scheme consists of four procedures. The first is a key generation algorithm that generates a classical encryption key, a classical decryption key, and potentially some additional auxiliary key. The second is an encryption algorithm, that encrypts the input using the encryption key. Third is a decryption algorithm that decrypts the output using the decryption key. Finally, there is an evaluation algorithm that performs the computation on the ciphertext without decryption, which may use the auxiliary key. For any permissible logical circuit $C$, the result of the evaluation algorithm should be such that after decrypting the output, one obtains the result of applying $C$ to the unencrypted input. A fully homomorphic encryption scheme, then, is one in which $C$ can be freely chosen from the set of all classical circuits. Here we shall consider only schemes with perfect completeness, where the evaluation operator must deterministically implement the chosen circuit. We will say that a homomorphic encryption scheme has {\em perfect information theoretic security} if the mutual information between the plaintext and ciphertext is zero. 

\begin{figure}[t!]
\centering
\includegraphics[width=\columnwidth]{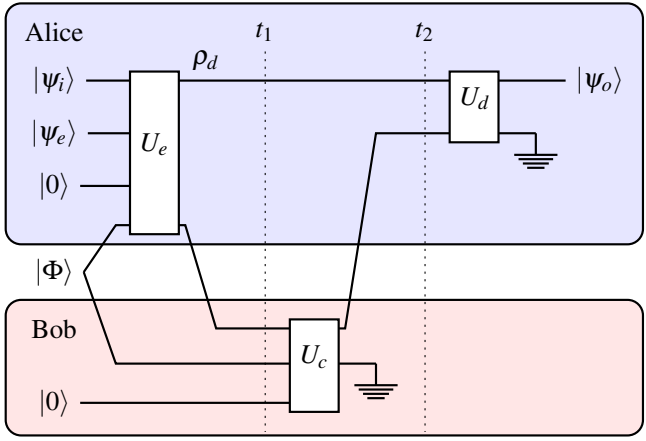}
\caption{A schematic diagram for a general quantum homomorphic encryption scheme with input data $\ket{\psi_i}$ and output $\ket{\psi_o}$. The state $\ket{\psi_e}$ represents the initial state of Alice's key, while $U_e$ and $U_d$ are Alice's encryption and decryption operators. Both parties are also allowed an ancilla system, and access to a shared entanglement resource. Alice's decryption key corresponds to the subsystem she retains after applying $U_e$ to her system. Note that no assumption is made about the dimensionality of subsystems. Time points $t_1$ and $t_2$, used in the proof of Theorem \ref{thm:no-homo}, are also shown.
}\label{fig:QHE}
\end{figure}

We will define a {\em quantum homomorphic encryption} (QHE) scheme using similar criteria as for the classical case, extended to take into account the possibility of entanglement within the protocol. A QHE scheme consists of four components: a key generation protocol which produces a quantum state $\ket{\psi_e}$ used as a key for encryption; an encryption unitary operator $U_e$ which encrypts the input state $\ket{\psi_i}$ using the encryption key state, potentially making use of some ancilla system, and which produces a decryption key in a state $\rho_d$; a decryption unitary operator $U_d$ which decrypts the encrypted state using the key state; and a set of evaluation unitary operators $\{U_C\}$, such that after decrypting the output the net effect is equivalent to applying the quantum circuit $C$ directly to the initial input state. Here the decryption key is produced when the encryption unitary is applied. Although this is somewhat more general than the procedure for generating the corresponding classical key, we make this generalization to allow for the possibility of a causal relationship between encryption and decryption keys which, via the no-cloning theorem, may prevent them from existing simultaneously. Note that we have not specified an auxiliary key. This is because, without loss of generality, we can assume that this auxiliary key forms part of the encrypted state. An encryption-evaluation-decryption sequence based on this definition is depicted in Figure \ref{fig:QHE}.

As we now prove, for such a scheme to operate deterministically, it is necessary that the dimension of the encrypted state grows as the log of the cardinality of the set of possible choices of $C$, and hence fully homomorphic encryption with perfect information theoretic security is impossible except when the size of the encoding grows exponentially with the size of the plaintext. To prove this, we begin by proving a modified version of an information localisation theorem due to Griffiths \cite{Grif05}.

\begin{lemma}[Data Localisation]\label{lemma:DL}
Let $S$ be some bipartite quantum system with Hilbert space $\mathcal{H}_A \otimes \mathcal{H}_B$, initially in state $\left(\ket{\psi} \otimes \ket{\phi}\right)_A \otimes \ket{\gamma}_B$, where $\ket{\phi}$ and $\ket{\gamma}$ are fixed states. Let $\rho$ be the state of $S$ after the application of a unitary operator $U$.  Then, if the mutual information $I(\Tr_{A}\rho ; \ket{\psi}\bra{\psi}) = 0$, there exists a unitary operator $V: \mathcal{H}_A \mapsto \mathcal{H}_A$ such that $\Tr_{B}\rho = V \left(\ket{\psi}\bra{\psi} \otimes \sigma\right) V^\dag$ for some density matrix $\sigma$ independent of $\ket{\psi}$.
\end{lemma}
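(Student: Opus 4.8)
The plan is to exploit the purity of the global state and recast the hypothesis as a decoupling condition. Since $S$ begins in the pure state $\ket{\Psi}=(\ket{\psi}\otimes\ket{\phi})_A\otimes\ket{\gamma}_B$ and $U$ is unitary, $\rho=U\ket{\Psi}\bra{\Psi}U^\dagger$ is pure, and the map $\ket{\psi}\mapsto U(\ket{\psi}\otimes\ket{\phi}\otimes\ket{\gamma})$ is a linear isometry $V_{\mathrm{iso}}:\mathcal H_1\to\mathcal H_A\otimes\mathcal H_B$, where $\mathcal H_1$ denotes the Hilbert space carrying $\ket{\psi}$. The condition $I(\Tr_A\rho;\ket{\psi}\bra{\psi})=0$ means that the reduced state on $B$ reveals nothing about the input; I would first argue that this forces $\Tr_A\rho$ to be one fixed density matrix $\sigma_B$, identical for every input $\ket{\psi}$. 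Diagonalising $\sigma_B=\sum_k\lambda_k\ket{k}\bra{k}$ fixes a $\psi$-independent orthonormal basis $\{\ket{k}\}$ of the support in $\mathcal H_B$, which is the reference frame the rest of the argument hangs on.

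Next I would expand the isometry in this fixed basis, writing $V_{\mathrm{iso}}\ket{\psi}=\sum_k (K_k\ket{\psi})_A\otimes\ket{k}_B$ for linear operators $K_k:\mathcal H_1\to\mathcal H_A$. Tracing out $A$ gives $\Tr_A\rho=\sum_{k,k'}\bra{\psi}K_{k'}^\dagger K_k\ket{\psi}\,\ket{k}\bra{k'}_B$, and equating this with the fixed $\sigma_B$ for every unit vector $\ket{\psi}$ yields $\bra{\psi}K_{k'}^\dagger K_k\ket{\psi}=\lambda_k\delta_{kk'}$. The crux is to upgrade these scalar constraints to operator identities: a quadratic form that vanishes on all vectors, evaluated also on the complex combinations $\ket{a}+\ket{b}$ and $\ket{a}+i\ket{b}$, forces the off-diagonal $K_{k'}^\dagger K_k=0$, while a quadratic form that is constant on the unit sphere forces $K_k^\dagger K_k=\lambda_k\,\mathrm{id}_1$. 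Thus each $W_k:=\lambda_k^{-1/2}K_k$ (for $\lambda_k>0$) is an isometry, and the relations $W_{k'}^\dagger W_k=\delta_{kk'}\,\mathrm{id}_1$ say precisely that these isometries have mutually orthogonal ranges inside $\mathcal H_A$. Carrying out this operator-identity step directly, rather than through a Schmidt decomposition (which would force me to track basis ambiguities whenever $\sigma_B$ is degenerate), is where I expect the real work to lie.

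Finally I would assemble the factorised form. Since $\mathcal H_A=\mathcal H_1\otimes\mathcal H_2$ and the $W_k$ have orthogonal ranges, the number of nonzero $\lambda_k$ is at most $\dim\mathcal H_2$, so I may pick an orthonormal set $\{\ket{e_k}\}$ in $\mathcal H_2$ indexed by these $k$ and define $V_0(\ket{\psi}\otimes\ket{e_k}):=W_k\ket{\psi}$. The orthogonal-range relations make $V_0$ an isometry on its domain, which extends to a unitary $V:\mathcal H_A\to\mathcal H_A$. Setting $\sigma:=\sum_k\lambda_k\ket{e_k}\bra{e_k}$, a short computation gives $V(\ket{\psi}\bra{\psi}\otimes\sigma)V^\dagger=\sum_k\lambda_k W_k\ket{\psi}\bra{\psi}W_k^\dagger=\sum_k K_k\ket{\psi}\bra{\psi}K_k^\dagger=\Tr_B\rho$, with $\sigma$ manifestly independent of $\ket{\psi}$, as required. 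The two remaining points needing care are the initial translation of the zero-mutual-information hypothesis into the statement that $\Tr_A\rho$ is input-independent, and the dimension count guaranteeing that $\mathcal H_2$ is large enough to host the labels $\ket{e_k}$; both fall out once the $W_k$ are in hand.
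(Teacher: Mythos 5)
Your proposal is correct and follows essentially the same route as the paper: your Kraus-operator expansion in the fixed eigenbasis of $\sigma_B$ is the paper's expansion $U(\ket{a^j}\otimes\ket{\phi}\otimes\ket{\gamma})=\sum_k\sqrt{p_k}\ket{\tau^{jk}}\otimes\ket{b^k}$ in operator form (with $K_k\ket{a^j}=\sqrt{p_k}\ket{\tau^{jk}}$), your polarization step is exactly the paper's use of the superposition inputs $(\ket{a^j}+\ket{a^{j'}})/\sqrt{2}$ and $(\ket{a^j}+i\ket{a^{j'}})/\sqrt{2}$, and your assembly of $V$ from the orthogonal-range isometries $W_k$ mirrors the paper's factorization $\mathcal{H}_A=\mathcal{H}_C\otimes\mathcal{H}_D$ with $\ket{\tau^{jk}}=\ket{c^j}\otimes\ket{d^k}$. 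The only substantive difference is presentational: your basis-free operator identities sidestep the Schmidt-expansion language (and any worry about degenerate $p_k$), but the underlying computation is identical.
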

\begin{proof}
For simplicity of notation we will define $\rho_A = \Tr_{B}\rho$ and $\rho_B = \Tr_{A}\rho$, and use $r$ to denote the rank of $\rho_B$. We shall further divide the Hilbert space $\mathcal{H}_A = \mathcal{H}_{A_1} \otimes \mathcal{H}_{A_2}$ such that $\ket{\psi}$ is the state of $\mathcal{H}_{A_1}$ and $\ket{\phi}$ is the state of $\mathcal{H}_{A_2}$.

We begin from the requirement that $I(\Tr_{A}\rho ; \ket{\psi}\bra{\psi}) = 0$. This implies that changing the value of $\ket{\psi}$, while holding $\ket{\phi}$ and $\ket{\gamma}$ constant, will not alter $\Tr_{A}\rho$. We shall consider the effect on the state of $\rho$ of varying only $\ket{\psi}$. Let an orthonormal basis of $\mathcal{H}_{A_1}$ as $\ket{a^j}$, $j=1,2,\cdots,d_{A_1}$, where $d_{A_1}$ is the dimension of $\mathcal{H}_{A_1}$, and let $\{\ket{b^k}: k=1,2,\cdots,r\}$ be an orthonormal set of eigenstates of $\rho_B$ with corresponding eigenvalues $p_k$. 

For each state $\ket{a^j}$, for $1\leq j\leq d_{A_1}$, we can expand the state of the system after application of $U$ to yield
\begin{multline}\label{eq1}
U (\ket{a^j}_{A_1} \otimes \ket{\phi}_{A_2}\otimes \ket{\gamma}_B) = \sum_{k=1}^{r} \sqrt{p_k} \ket{\tau^{jk}}_{A} \otimes \ket{b^k}_B.
\end{multline}
Note that the possible complex phases have been absorbed into the definition of $\ket{\tau^{jk}}$. Since $\ket{b^k}$ are eigenstates of $\rho_B$ with eigenvalues $p_k$, the expansion on the right hand side of Eq.~\eqref{eq1} is a Schmidt expansion for $U (\ket{a^j}_{A_1} \otimes \ket{\phi}_{A_2}\otimes \ket{\gamma}_B)$, and hence $\{\ket{\tau^{jk}}: k=1,2,\cdots,r\}$ for any fixed $j$ must be orthonormal.  Thus, we have  $\braket{\tau^{jk}}{\tau^{j k'}}=\delta_{k,k'}$.

Now consider the case where we keep the input on $\mathcal{H}_{A_2}$ and $\mathcal{H}_{B}$ fixed, while changing the input state on $\mathcal{H}_{A_1}$ to one of the form $\ket{\upsilon^{j j'}}=(\ket{a^j}+ \ket{a^{j'}})/\sqrt{2}$ for $j\neq j'$. In this case,
	\begin{multline}\label{eq3}
	U (\ket{\upsilon^{j j'}}_{A_1} \otimes \ket{\phi}_{A_2}\otimes \ket{\gamma}_B) = \\
	\sum_{k=1}^{r} \sqrt{p_k} \left[(\ket{\tau^{jk}} + \ket{\tau^{j' k}})/\sqrt{2}\right]_{A} \otimes \ket{b^k}_B.
	\end{multline}
	Since the output reduced density operator on $\mathcal{H}_B$ is still $\rho_B=\sum_{k=1}^r p_k \ketbra{b^k}{b^k}$, the right hand side of Eq.~\eqref{eq3} should be a Schmidt expansion, with the Schmidt coefficients still being $\sqrt{p_k}$. Hence $(\ket{\tau^{jk}} + \ket{\tau^{j' k}})/\sqrt{2}$ must be already normalised and these states must be orthogonal for different values of $k$. From this we obtain
	\begin{align}\label{eq6}
	\delta_{k,k'} &= \frac{1}{2} (\bra{\tau^{jk}} + 
	\bra{\tau^{j' k}}) (\ket{\tau^{jk'}} + \ket{\tau^{j' k'}})  \notag\\
	&= \delta_{k,k'}+ \frac{1}{2} (\braket{\tau^{jk}}{\tau^{j' k'}} + \braket{\tau^{j' k}}{\tau^{j k'}}), \quad j\neq j',
	\end{align}
and hence $\braket{\tau^{jk}}{\tau^{j' k'}} + \braket{\tau^{j' k}}{\tau^{j k'}}=0$ as long as $j\neq j'$.

	Similarly, by considering input states on $\mathcal{H}_{A_1}$ of the form $\ket{\eta^{j j'}}=(\ket{a^j} + i \ket{a^{j'}})/\sqrt{2}$, we obtain $\braket{\tau^{jk}}{\tau^{j' k'}} - \braket{\tau^{j' k}}{\tau^{j k'}}=0$ and hence $\braket{\tau^{jk}}{\tau^{j' k'}} =0$ for $j\neq j'$. These criteria can be expressed compactly as $\braket{\tau^{jk}}{\tau^{j' k'}}=\delta_{j,j'} \delta_{k,k'}$. Hence $\{\ket{\tau^{jk}}\}$ forms an orthonormal set, and it is possible to \emph{define} the subspaces $\mathcal{H}_C$ and $\mathcal{H}_D$ as having orthonormal bases $\{\ket{c^j}\}$ and $\{\ket{d^k}\}$ such that $\mathcal{H}_{A} = \mathcal{H}_C \otimes \mathcal{H}_D$, and
	\begin{equation}\label{eq15}
	\ket{\tau^{jk}}=\ket{c^j} \otimes \ket{d^k}, \quad j=1,2,\cdots,d_{A_1}, \quad k=1,2,\cdots,r.
	\end{equation}
For a generic input state $\ket{\xi} = \ket{\psi} \otimes \ket{\phi} \otimes \ket{\gamma}$, where $\ket{\psi}=\sum_{j=1}^{d_{A_1}} \alpha_j \ket{a^j}$ we then have
	\begin{align}\label{eq16}
	U \ket{\xi}  &= \sum_{j=1}^{d_{A_1}} \alpha_j \sum_{k=1}^r \sqrt{p_k} \ket{\tau^{jk}}_{A} \otimes \ket{b^k}_B \notag\\
	&=\left(\sum_{j=1}^{d_{A_1}} \alpha_j \ket{c^j} \right)_C \otimes \left(\sum_{k=1}^r \sqrt{p_k} \ket{d^k} \otimes \ket{b^k} \right)_{DB} \notag\\
	&=\ket{\psi}_C \otimes \left(\sum_{k=1}^r \sqrt{p_k} \ket{d^k} \otimes \ket{b^k} \right)_{DB}.
	\end{align}	
Now, let $V' : \mathcal{H}_{A_1} \mapsto \mathcal{H}_{C}$ be an isometry such that
\begin{equation}
V' \ket{a^j} = \ket{c^j},  \quad j=1,2,\cdots,d_{A_1},
\end{equation}
and let $V$ be any extension of $V'$ into a full unitary over $\mathcal{H}_A$. Then $\Tr_B \left(U \ket{\xi}\bra{\xi} U^\dagger \right) =  V \left(\ket{\psi}\bra{\psi} \otimes \sigma\right)V^\dag$, for some density operator $\sigma$ independent of $\ket{\psi}$, as the lemma requires.
\end{proof}

Lemma \ref{lemma:DL} shows that in any quantum homomorphic encryption scheme with perfect information theoretic security, the computation has to occur on Alice's ``side''. The following theorem formalises this intuition, showing that the encrypted state must contain enough information to identify any operator previously applied to it.

\begin{theorem}\label{thm:no-homo}
Let $Q$ be a quantum homomorphic encryption scheme with perfect information theoretic security with encryption operator $U_e$ and decryption operator $U_d$ and a set of evaluation unitaries. Let $\rho_b$ ($\rho_b'$) be the state of the encrypted system after application of evaluation unitary $U_c$ ($U_{c'}$) corresponding to a quantum circuit $c$ ($c'$), to an input state $\ket{\psi_i}$. Then, if $b$ and $b'$ implement distinct unitary operations, $\rho_b$ and $\rho_b'$ must have orthogonal support.
\end{theorem}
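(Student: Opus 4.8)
The plan is to combine the structural consequence of Lemma~\ref{lemma:DL} with the correctness and determinism of decryption. First I would apply the Data Localisation lemma to the encryption step, taking $\mathcal{H}_A$ to be Alice's retained (decryption-key) system and $\mathcal{H}_B$ the encrypted system handed to Bob, with $\ket{\psi}=\ket{\psi_i}$ the plaintext and the perfect-security hypothesis playing the role of $I(\Tr_A\rho;\ketbra{\psi}{\psi})=0$. Reading off Eq.~\eqref{eq16}, this shows that at time $t_1$ the global post-encryption state is pure and factorises as $\ket{\psi_i}_C\otimes\ket{\Theta}_{DB}$, where $C$ is a subsystem of Alice holding an undisturbed copy of the plaintext and $\ket{\Theta}_{DB}$ is a fixed state, independent of $\ket{\psi_i}$, shared between the remainder $D$ of Alice's key and Bob's system $B$.

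Next I would propagate this to time $t_2$. Since Bob's evaluation $U_c$ acts only on $B$, the state becomes $\ket{\psi_i}_C\otimes\ket{\Theta_c}_{DB}$ with $\ket{\Theta_c}=(\mathbb{1}_D\otimes U_c)\ket{\Theta}$, still independent of $\ket{\psi_i}$. The crucial modelling point is that a single decryption operator $U_d$, fixed in advance and independent of the circuit chosen by Bob, must recover the computed output. Perfect completeness with a deterministic, pure output then forces $U_d(\ket{\psi_i}_C\otimes\ket{\Theta_c}_{DB})=(W_c\ket{\psi_i})_O\otimes\ket{g_c}$, where $W_c$ is the unitary implemented by $c$ and $O$ is the output register. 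A standard linearity argument (expanding $\ket{\psi_i}$ in a basis and matching coefficients, using that the $W_c\ket{j}$ are orthonormal) shows that the residual state $\ket{g_c}$ is independent of $\ket{\psi_i}$.

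The heart of the argument is then an overlap calculation. Taking the inner product of the two decryption equations for circuits $c,c'$ and arbitrary inputs $\ket{\psi_i},\ket{\psi_i'}$, and using that $U_d$ is unitary, gives
\begin{equation}
\braket{\psi_i'}{\psi_i}\,\braket{\Theta_{c'}}{\Theta_c}=\bra{\psi_i'}W_{c'}^\dagger W_c\ket{\psi_i}\,\braket{g_{c'}}{g_c}
\end{equation}
for all $\ket{\psi_i},\ket{\psi_i'}$, which is the operator identity $\braket{\Theta_{c'}}{\Theta_c}\,\mathbb{1}=\braket{g_{c'}}{g_c}\,W_{c'}^\dagger W_c$. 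Because $W_{c'}^\dagger W_c$ is unitary, a nonzero coefficient $\braket{g_{c'}}{g_c}$ would make it proportional to the identity, i.e.\ $W_c=e^{i\theta}W_{c'}$, contradicting the assumption that $c$ and $c'$ implement distinct unitaries. Hence $\braket{g_{c'}}{g_c}=0$, and the identity then forces $\braket{\Theta_{c'}}{\Theta_c}=0$, so the post-evaluation states associated with $c$ and $c'$ are orthogonal, which is the claimed orthogonality of $\rho_b$ and $\rho_b'$.

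I expect the main obstacle to be pinning down exactly which system carries this orthogonality. The overlap argument naturally yields orthogonality of the global post-evaluation pure states $\ket{\Theta_c}_{DB}$, equivalently orthogonal support once Bob's encrypted system is considered together with the reference $D$ that purifies it. Relating this to the reduced state on Bob's system alone requires care, since the purifying part $D$ of the decryption key is untouched by Bob but may be entangled with $B$; indeed one can write down toy schemes in which the purifications are orthogonal while the reduced states on $B$ coincide, so the cleanest route is to track the encrypted system jointly with this fixed reference throughout rather than its reduced state. I would also be careful to justify that the decryption unitary may be taken independent of the circuit and that the ``for all inputs'' freedom used in the overlap step is genuinely available in the scheme.
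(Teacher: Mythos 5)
Your proposal is correct and follows the same skeleton as the paper's proof: localise the plaintext on Alice's side via Lemma~\ref{lemma:DL} at time $t_1$, exploit the circuit-independence of the localising unitary, the residual key state and $U_d$ at time $t_2$, and finish with a no-programming argument. The executions differ in two instructive ways. First, where the paper cites the Nielsen--Chuang no-programming theorem for the (possibly mixed) program state $\rho_b$, you re-derive it inline through the overlap identity $\braket{\Theta_{c'}}{\Theta_c}\,I=\braket{g_{c'}}{g_c}\,W_{c'}^{\dagger}W_c$; this is sound, and your linearity argument for the input-independence of $\ket{g_c}$ is exactly the device used inside the cited theorem's proof (it also quietly covers the mixed-program case the paper needs, since the cited theorem is stated for pure programs). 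Second, you keep the global purification throughout, and consequently your orthogonality lands on the joint states $\ket{\Theta_c}_{DB}$ of Alice's retained key remnant together with Bob's system, whereas the theorem asserts orthogonal support for the reduced states $\rho_b,\rho_b'$ of the encrypted system alone.

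The obstacle you flag in closing is genuine, and it is the paper, not you, that passes over it: the paper writes Alice's state at $t_2$ as $(V\otimes I)\left(\ketbra{\psi_i}{\psi_i}\otimes\rho_{a,1}'\otimes\rho_b\right)(V^{\dagger}\otimes I)$, which silently assumes Bob's message is uncorrelated with Alice's retained system $D$; in the stated model, which explicitly allows shared entanglement, this can fail. Your suspicion that toy schemes then break the reduced-state claim is right. Concretely, let the shared resource be $\ket{\Phi^+}_{DB}$, let the transmitted ciphertext carry nothing about $\ket{\psi_i}$ (which stays in Alice's retained register), and let Bob apply $U_c\in\{I,Z\}$ to his half before returning it: the joint post-evaluation states $\ket{\Phi^{\pm}}_{DB}$ are orthogonal, and a fixed $U_d$ that applies a Pauli correction coherently controlled on the Bell basis of $DB$ deterministically implements the distinct unitaries $I$ and $X$ on the plaintext with perfect security, yet $\rho_b=\rho_b'=I/2$. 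So the literal statement requires the paper's implicit product assumption, while your joint-system version is the one that holds in general; it still yields the corollary's dimension bound, now applied to the decryption key plus returned message. In short, your proof is correct, matches the paper's strategy, and your more careful bookkeeping exposes a real gap in the paper's own passage from joint to reduced states rather than a defect in your argument.
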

\begin{proof}
For clarity, we will identify different parts of the encryption, circuit evaluation and decryption process with two parties, Alice and Bob, as depicted in Figure \ref{fig:QHE}. We begin by analysing the state of Alice and Bob's joint system after Alice has sent her encoded data to Bob. This is marked as time $t_1$ in Figure \ref{fig:QHE}. Let $\rho_{a,1}$ ($\rho_{b,1}$) be the states of Alice's (Bob's) subsystem at this point. From this point forward, all communication flows from Bob to Alice. The requirement that $Q$ be perfectly information theoretically secure implies that $I( \rho_{b,1} ; \ketbra{\psi_i}{\psi_i}) = 0$.
Hence, by Lemma \ref{lemma:DL} there exists some unitary operator $V$ such that
\begin{equation}
\rho_{a,1} = V (\ketbra{\psi_i}{\psi_i} \otimes \rho_{a,1}') V^\dag,	
\end{equation}
for some appropriate $\rho_{a,1}'$.

Now, consider the system after Bob has sent his message back to Alice. This is time $t_2$ in Figure \ref{fig:QHE}. Due to the previous analysis the state of the system at this point can be written as
\begin{equation}
\rho_{a,2}  = \left( V \otimes I \right)\ketbra{\psi_i}{\psi_i} \otimes \rho_{a,1}' \otimes \rho_{b} \left( V^\dag \otimes I \right),
\end{equation}
where $\rho_{b}$ represents Bob's message. The density matrix $\rho_b$ cannot in general be assumed to be pure, since Bob could have sent a message that remains entangled to his system. Here $V$ acts only on the part of the system that was in Alice's possession prior to receiving the message from Bob, and the identity operator $I$ acts on Bob's message.

The requirement that the evaluation unitary $U_c$ implements a specific circuit $c$ implies that
\begin{align}
U_d \rho_{a,2} {U_d}^\dag = \left( W_c \ketbra{\psi_i}{\psi_i} W_c^\dag \right)  \otimes \rho_\text{anc},
\end{align}
where $W_c$ is the unitary operator corresponding to quantum circuit $c$, and $\rho_\text{anc}$ is simply some state of the ancilla system. Let $U_d' = U_d \left( V \otimes I \right)$, then for all $c$ and all $\ket{\psi_i}$,
\begin{equation}
U_d' \left( \ketbra{\psi_i}{\psi_i} \otimes \rho_{a,1}' \otimes \rho_b \right) U_d'^\dag =  \left(W_c \ketbra{\psi_i}{\psi_i} {W_c}^\dag \right) \otimes \rho_\text{anc}.
\end{equation}
As the state $\rho_{a,1}'$ and the operator $U_d'$ are independent of $c$, in the language of \cite{nc97} this corresponds to a programmable quantum gate array, where $\rho_b$ acts as a \emph{program} to implement the unitary operator $W_c$. The \emph{no programming} theorem \cite{nc97} states that for a programmable quantum gate array to implement two distinct unitary operators, the program states must be orthogonal. Hence if $\rho_b$ and $\rho_{b'}$ correspond to the messages returned from Bob after application of evaluation operators corresponding to two non-equivalent circuits, then $\rho_b$ and $\rho_{b'}$ must have orthogonal support.
\end{proof}

A direct consequence of this theorem is that for any perfectly information theoretically secure homomorphic scheme (fully homomorphic or otherwise), if a known input state is encrypted, and an evaluation operator from some unknown circuit $c$ is applied, it is always possible to unambiguously determine $c$ from the resulting encrypted state. This mirrors a result obtained for one time programs \cite{broadbent2013quantum}, a similar task in which the secret to be protected is Bob's circuit rather than Alice's input. Further, this property severely compromises the efficiency of any QHE encoding, as we now prove.

\begin{corollary}
Let $Q$ be a QHE scheme, with perfect information theoretic security, that corresponds to a permissible set of operations $S$. Then the size of the system required to store the encrypted state after the application of an evaluation operator $U_c$ corresponding to an arbitrary operation in $S$ is at least $\log_2|S|$ qubits. Further, if $S$ contains the set of reversible classical operations on $n$ bits, then the size of the encrypted state grows exponentially in $n$. 
\end{corollary}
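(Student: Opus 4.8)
The plan is to combine the orthogonality conclusion of Theorem~\ref{thm:no-homo} with an elementary dimension-counting argument, and then to handle the second claim with a Stirling estimate. First I would fix a single known input state $\ket{\psi_i}$ and, for each operation $c \in S$, consider the encrypted state $\rho_c$ produced after applying the corresponding evaluation unitary $U_c$. Since the elements of $S$ implement pairwise distinct unitaries, Theorem~\ref{thm:no-homo} guarantees that for any $c \neq c'$ the states $\rho_c$ and $\rho_{c'}$ have orthogonal support. Crucially, all of these states live in the Hilbert space of a single, fixed encrypted system, so they may be compared within one space.

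Next I would translate this pairwise orthogonality into a lower bound on the Hilbert space dimension $d$ of the encrypted system. Because the supports $\mathrm{supp}(\rho_c)$ are mutually orthogonal subspaces, each of dimension at least one, their direct sum sits inside the encrypted system's Hilbert space and satisfies $\sum_{c \in S} \dim \mathrm{supp}(\rho_c) \le d$. As there are $|S|$ terms each contributing at least $1$, this forces $d \ge |S|$. Writing the dimension in terms of qubits via $d = 2^m$ immediately yields $m \ge \log_2 |S|$, establishing the first claim.

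For the second claim I would instantiate $S$ as the group of reversible classical operations on $n$ bits, that is, the permutations of the $2^n$ computational basis strings, so that $|S| = (2^n)!$. The first part then gives a lower bound of $\log_2\big((2^n)!\big)$ qubits. Applying Stirling's approximation with $m = 2^n$, one has $\log_2(m!) = m\log_2 m - m\log_2 e + O(\log_2 m)$, so that $\log_2\big((2^n)!\big) = 2^n\big(n - \log_2 e\big) + O(n)$, which grows as $\Theta(n\,2^n)$ and hence exponentially in $n$.

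The argument is largely routine; the point requiring the most care is the step that upgrades the pairwise orthogonality of Theorem~\ref{thm:no-homo} into the additive dimension bound used in the counting. I would state explicitly that pairwise orthogonal supports are genuinely mutually orthogonal, so that the subspaces are linearly independent and their dimensions sum, which is immediate for orthogonal subspaces but underpins the inequality $|S| \le d$. A secondary subtlety is that $|S|$ must be understood as the number of distinct induced unitaries rather than distinct circuit descriptions, since Theorem~\ref{thm:no-homo} separates operations only according to the unitary they implement; for the reversible classical operations this distinction is harmless, as each distinct permutation induces a distinct unitary.
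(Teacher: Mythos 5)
Your proposal is correct and takes essentially the same route as the paper: apply Theorem \ref{thm:no-homo} to a fixed input to obtain states with mutually orthogonal supports, use the fact that each support has dimension at least one to force the encrypted system's Hilbert space dimension to be at least $|S|$, and then lower-bound $|S|$ by $(2^n)!$ when $S$ contains the reversible classical operations. The only difference is cosmetic: you invoke Stirling's formula to estimate $\log_2\bigl((2^n)!\bigr)$ as $\Theta(n\,2^n)$, whereas the paper settles for the cruder bound $\log_2\bigl((2^n)!\bigr) \geq 2^n$, which already suffices for exponential growth.
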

\begin{proof}
The proof of the first part of the corollary follows directly from Theorem \ref{thm:no-homo}. Each $\rho_b$ corresponding to an operator in $S$ must have orthogonal support on a distinct subspace. Since each such density operator must have at least unit rank, a system must be at least $|S|$-dimensional in order to represent every possible $\rho_b$. The final part of the corollary follows from the fact that there are $(2^n)!$ distinct permutations of the $n$-bit classical states, and hence any $S$ which contains all such operations must have cardinality at least $\log_2 (2^n)! \geq 2^n$.
\end{proof}

From this corollary, it follows that no QHE with perfect information theoretic security can deterministically implement either universal quantum computation or reversible classical computation without incurring exponential overhead, and hence in order to obtain an information theoretically secure QHE, one must be willing to sacrifice either perfect information security, determinism, or face restriction to a permissible set of circuits which is polynomial in the size of the input.

The authors thank Joshua Kettlewell, Yingkai Ouyang and Si-Hui Tan for helpful discussions. The authors acknowledge support from Singapore's National Research Foundation and Ministry of Education. This material is based on research funded by the Singapore National Research Foundation under NRF Award NRF-NRFF2013-01.

\bibliographystyle{apsrev}
\bibliography{homomorphic}

\end{document}